\documentclass[conference,10pt]{IEEEtran}
\usepackage[left=0.65in, right=0.65in, bottom=1.1in, top=0.75in]{geometry}

\usepackage{bbm}
\usepackage{amsmath}
\usepackage{acronym}  
\usepackage[dvips]{color}
\usepackage{epsf}
\usepackage{times}
\usepackage{epsfig}
\usepackage{notoccite}
\usepackage{graphicx}
\usepackage{epstopdf}
\usepackage{pstricks}
\usepackage{amssymb}
\usepackage{amsthm}
\usepackage{amsxtra}
\usepackage{here}
\usepackage{rawfonts}
\usepackage{float}
\usepackage{times}
\usepackage{url}
\usepackage{cite}
\usepackage{caption}
\usepackage{subcaption}
\usepackage{blindtext}
\usepackage{enumitem}
\usepackage{xcolor,cite,etoolbox}
\usepackage{relsize}
\usepackage{siunitx}
\usepackage{lipsum}
\usepackage{graphicx}
\usepackage{tabularx}
\usepackage{xparse}
\usepackage{array}
\newcolumntype{P}[1]{>{\centering\arraybackslash}p{#1}}
\usepackage{mleftright}

\usepackage{mathtools}

\usepackage{graphics}
\usepackage{physics}
\usepackage{amssymb}
\usepackage{multicol}

\usepackage[nomain,acronym,shortcuts]{glossaries}
\makeglossaries
\newcommand*{\acro}[3][]{\newacronym[#1]{#2}{#2}{#3}}

\acro{D2D}{device-to-device}
\acro{SIR}{signal-to-interference-ratio}
\acro{SINR}{signal-to-interference-plus-noise-ratio}
\acro{PCP}{Poisson cluster process}
\acro{CoMP}{coordinated multi-point}
\acro{BS}{base station} 
\acro{MD-CoMP}{macrodiversity CoMP transmission}
\acro{MAC}{medium-access-control}
\acro{JT-CoMP}{joint transmission CoMP}
\acro{CoMP-JT}{coordinated multipoint joint transmission}
\acro{SBS}{small base station}
\acro{MDSD}{multiple devices to a single device}
\acro{MDS}{maximum distance separable}
\acro{SCN}{small cell network}
\acro{PPP}{Poisson point process}
\acro{TCP}{Thomas cluster process}
\acro{CSI}{channel state information}
\acro{PDF}{probability distribution function}
\acro{PMF}{probability mass function}
\acro{RV}{random variable}
\acro{i.i.d.}{independently and identically distributed}
\acro{MBMS}{multimedia broadcasting multicasting service}
\acro{EE}{energy efficiency}
\acro{HCP}{hard-core placement}
\acro{CCDF}{complementary cumulative distribution function}
\acro{CDF}{cumulative distribution function}
\acro{PC}{probabilistic caching}
\acro{RC}{random caching}
\acro{CPF}{caching popular files} 
\acro{PGFL}{probability generating functional}
\acro{KKT}{Karush-Kuhn-Tucker}
\acro{PGF}{point generating function}
\acro{SCA}{successive convex approximation}
\acro{HD}{high-definition}
\acro{FHD}{full-high-definition}
\acro{UHD}{ultra-high-definition}
\acro{VR}{virtual reality}
\acro{AR}{augmented reality}
\acro{5G}{fifth-generation}
\acro{QoS}{quality-of-service}
\acro{QoE}{quality-of-experience}
\acro{IoT}{Internet of Things}
\acro{MHCPP}{Matern hardcore point process}
\acro{LoS}{line-of-sight}
\acro{NLoS}{non-line-of-sight}
\acro{PSD}{power spectral density}
\acro{MEC}{mobile edge computing}
\acro{E2E}{end-to-end}
\acro{THz}{terahertz}
\acro{CLT}{central limit theorem}
\acro{HQ}{High Quality}
\acro{eMBB}{enhanced mobile broadband}
\acro{URLLC}{ultra reliable low latency communications}
\acro{mmWave}{millimeter wave}
\acro{EVT}{extreme value theory}
\acro{GEV}{generalized extreme value}
\acro{LIS}{large intelligent surface}
\acro{RIS}{reconfigurable intelligent surface}
\acro{RF}{radio frequency}
\acro{UE}{user equipment}
\acro{MIMO}{multiple-input multiple-output}
\acro{EVaR}{entropic value-at-risk}
\acro{DNN}{deep neural network}
\acro{MDP}{Markov decision process}
\acro{RL}{reinforcement learning}
\acro{RNN}{recurrent neural network}
\acro{ANN}{artificial neural networks}
\acro{LSTM}{long short-term memory}
\acro{ReLu}{rectified linear unit}
\acro{VaR}{value-at-risk}
\acro{SNR}{signal-to-noise ratio}
\acro{AoSA}{array of subarray}
\acro{XR}{extended reality}
\acro{AoA}{angle of arrival}
\acro{ULA}{uniform linear array}
\acro{AoD}{angle of departure}
\acro{EM}{electromagnetic}
\acro{HRLLC}{s high-rate and high-reliability low latency communications}
\acro{6DoF}{six degrees of freedom}
\acro{MR}{mixed reality}
\acro{PAPR}{peak to average power ratio}
\acro{OFDM}{orthogonal frequency-division multiplexing}
\acro{OFDMA}{orthogonal frequency-division multiple access}
\acro{SC-FDM}{single carrier frequency-division multiplexing}
\acro{ToA}{time of arrival}
\acro{MUSIC}{multiple signal classification}
\acro{IoE}{Internet of Everything}
\acro{DT}{digital twin}
\acro{PT}{physical twin}
\acro{CT}{cyber twin}
\acro{DRL}{deep reinforcement learning}
\acro{FL}{federated learning}
\acro{DL}{deep learning}
\acro{CRAS}{connected robotics and autonomous system}
\acro{CL}{continual learning}
\acro{MSE}{mean squared error}
\acro{EWC}{elastic weight consolidation}
\acro{ML}{machine learning}
\acro{GD}{gradient descent}
\acro{MLP}{multi layer perceptron}
\acro{TL}{transfer learning}
\acro{AI}{artificial intelligence}
\acro{RoI}{region-of-interest}

\usepackage{datetime}
\usepackage{amssymb}
\usepackage{subcaption}
\usepackage{verbatim}

\newtheorem{proposition}{\bf Proposition}
\newtheorem{lemma}{\bf Lemma}

\IEEEoverridecommandlockouts
\usepackage{cite}
\usepackage{amsmath,amssymb,amsfonts}
\usepackage{algorithmic}
\usepackage{graphicx}
\usepackage{textcomp}
\usepackage{xcolor}
\usepackage{mathtools}
\usepackage{gensymb}
\usepackage{optidef}
\usepackage[linesnumbered,ruled,vlined]{algorithm2e}
\SetKwInput{KwInput}{Input}
\SetKwInput{KwOutput}{Output}
\SetKwInput{KwInitialize}{Initialize}
\usepackage{float}
\begin{document}
\title{Towards a Decentralized Metaverse: Synchronized Orchestration of Digital Twins and Sub-Metaverses
\thanks{This research was supported by the U.S. National Science Foundation under Grants CNS-2210254 and CNS-2007635.}
\vspace{-0.35cm}}
\author{\small Omar Hashash\textsuperscript{\S}, Christina Chaccour\textsuperscript{\S}, Walid Saad\textsuperscript{\S}, Kei Sakaguchi\textsuperscript{\ddag}, and Tao Yu\textsuperscript{\ddag} \\  \textsuperscript{\S}Bradley Department of Electrical and Computer Engineering, Virginia Tech, Arlington, VA, USA.\\ 
\textsuperscript{\ddag}Department of Electrical and Electronic Engineering, Tokyo Institute of Technology, Tokyo, Japan. \\
Emails:\{omarnh, christinac, walids\}@vt.edu, \{sakaguchi, yutao\}@mobile.ee.titech.ac.jp \vspace{-0.4cm}}
\maketitle
\begin{abstract}
Accommodating digital twins (DTs) in the metaverse is essential to achieving \emph{digital reality}. This need for integrating DTs into the metaverse while operating them at the network edge has increased the demand for a \emph{decentralized edge-enabled metaverse}. 
Hence, to consolidate the fusion between real and digital entities, it is necessary to harmonize the interoperability between DTs and the metaverse at the edge. In this paper, a novel decentralized metaverse framework that incorporates DT operations at the wireless edge is presented. In particular, a system of autonomous physical twins (PTs) operating in a massively-sensed zone is replicated as cyber twins (CTs) at the mobile edge computing (MEC) servers. To render the CTs' digital environment, this zone is partitioned and teleported as distributed \emph{sub-metaverses} to the MEC servers.
To guarantee seamless synchronization of the sub-metaverses and their associated CTs with the dynamics of the real world and PTs, respectively, this joint synchronization problem is posed as an optimization problem whose goal is to minimize the average \emph{sub-synchronization time} between the real and digital worlds, while meeting the DT \emph{synchronization intensity} requirements. To solve this problem, a novel iterative algorithm for joint sub-metaverse and DT association at the MEC servers is proposed. This algorithm exploits the rigorous framework of optimal transport theory so as to efficiently distribute the sub-metaverses and DTs, while considering the computing and communication resource allocations.
Simulation results show that the proposed solution can orchestrate the interplay between DTs and sub-metaverses to achieve a $25.75\%$ reduction in the sub-synchronization time in comparison to the signal-to-noise ratio-based association scheme.

\end{abstract}
 \begin{IEEEkeywords}
 metaverse, digital twins, synchronization, sub-metaverse, optimal transport theory  
 \end{IEEEkeywords}

\vspace{-0.3cm}
\section{Introduction}
\vspace{-0.2cm}
The \emph{metaverse} is perhaps one of the most anticipated technological breakthroughs of the coming decade~\cite{wang2022survey}. In essence, the \emph{metaverse} is a massively scaled and interoperable network of real-time rendered three-dimensional digital worlds. The metaverse will lead to the emergence of a novel suite of hybrid \emph{physical-virtual-digital services} that could revolutionize the interconnection between people, things, and places~\cite{hackl2022navigating,xu2022full}. In essence, building a limitless metaverse of today's real world has various device (e.g. extended reality devices), communication, computing, and \ac{AI} challenges. Chief among those challenges is \emph{the \ac{E2E} synchronization} of the real world with the metaverse and its components such as \acp{DT}. Indeed, achieving this transparent replication imposes a set of stringent wireless network demands such as near-zero \ac{E2E} latency, effective computing, ubiquitous connectivity,  and ultra-high data rates~\cite{chaccour2022seven}. Thus, in an attempt to meet the aforementioned demands, the metaverse must adopt a \emph{decentralized, edge-enabled model}. Instead of limiting the metaverse to a centralized, computationally-draining, and rigid architecture as in the state-of-the-art~\cite{du2022exploring}, this shift enables segmenting the metaverse into a decentralized system of interconnected \emph{sub-metaverses} distributed at the network edge~\cite{dhelim2022edge}. 
Here, a \emph{sub-metaverse} is defined as a digital replica of a \emph{physical space} in the real world.
Furthermore, the decentralization of the metaverse requires decentralizing its key components as well, notably, \acp{DT}. 
\acp{DT} are used to coordinate autonomous \ac{IoE} applications that exist in the real world (e.g., autonomous vehicles) with their digital counterparts in the metaverse~\cite{hashash22}. Thus, it is vital for such real-time \acp{DT} to replicate the functionalities of the underlying physical applications while meeting the \ac{DT} synchronization requirements. Consequently, synchronizing both metaverse and \acp{DT} is pivotal for guaranteeing a high fidelity replica in the digital world.

\indent Recent works in~\cite{lu2021adaptive, han2022dynamic, ismail2022semantic, ng2022stochastic} have studied merging \acp{DT} and the metaverse with \ac{MEC}
in an effort to meet synchronization demands. The authors in~\cite{lu2021adaptive} proposed a deep reinforcement learning approach coupled with transfer learning to solve the \ac{DT}-\ac{MEC} placement and migration problems while minimizing the \ac{DT} synchronization delay.
The work in~\cite{han2022dynamic} studied the resource allocation problem for \ac{IoT} sensing devices with the goal of synchronizing the metaverse by controlling its \acp{DT}' synchronization using a game-theoretic framework. In~\cite{ismail2022semantic}, the authors proposed equipping \ac{IoT} sensing devices with semantic extraction algorithms to minimize the size of the transmitted data in an attempt to achieve enhanced metaverse synchronization.
The work in~\cite{ng2022stochastic} introduced a stochastic semantic resource allocation scheme to enhance the synchronization of virtual transportation networks in the metaverse.
However, these prior works are limited in various ways. First, the work in~\cite{lu2021adaptive} is limited to \ac{DT} synchronization and completely neglects that \acp{DT} are constituents of the metaverse, that in turn should be synchronized as well. Furthermore, the work in~\cite{han2022dynamic} assumes that the metaverse synchronization is achieved through the \ac{DT} synchronization, without strictly differentiating between them as two separate synchronization streams. 
Meanwhile, synchronization of the metaverse in~\cite{ismail2022semantic} and \cite{ng2022stochastic} does not account for the need to  synchronize the metaverse constituents, e.g., \acp{DT}.
Moreover, despite adopting a \ac{MEC} framework in~\cite{han2022dynamic, ismail2022semantic, ng2022stochastic}, these works still consider a centralized metaverse architecture, which cannot accommodate the deployment of \acp{DT} distributed across the edge.
Notably, if done in a centralized way, building a limitless metaverse remains highly intractable. This demands decomposing the metaverse into sub-metaverses that reside at the edge along with the \ac{DT} models.
\emph{Clearly, there is a lack of works that investigate the joint synchronization of \acp{DT} and sub-metaverses in a distributed metaverse framework, while orchestrating the interoperability between them at the edge.}
\indent The main contribution of this paper is a novel decentralized metaverse framework that distributes \acp{DT} and sub-metaverses over the wireless edge network, while preserving upmost synchronization with their physical counterparts in the real world.
In particular, a system of autonomous \acp{PT} that operate in a physical zone are digitally replicated as \acp{CT} at the network's \ac{MEC} servers. To perfectly replicate the physical environment in the digital world, this zone is equipped with massive sensing abilities. Furthermore, the zone is partitioned into separate regions that are teleported as sub-metaverses at the \ac{MEC} servers. To perfectly synchronize the distributed sub-metaverses and their residing \acp{CT} with the corresponding physical counterparts, this joint synchronization problem is posed as an optimization problem whose goal is to minimize the average \emph{sub-synchronization time} with the real world, while satisfying the \ac{DT} \emph{synchronization intensity} requirements. 
To solve this problem, we propose an iterative algorithm based on \emph{optimal transport theory} to determine the joint association of the sub-metaverses and their corresponding \acp{CT} at the \ac{MEC} servers.
We then prove the existence of an optimal solution for the physical region partitioning problem. Subsequently, we find the corresponding map to determine the optimal partitions and \ac{DT} association to minimize the sub-synchronization time, while considering the communications and computing resource allocations in the network. 
\emph{To the best of our knowledge, this is the first work that considers the synchronization of both \acp{DT} and metaverse by addressing the interplay between them at the edge}.
Simulation results show that the proposed solution can provide a tradeoff between associating \acp{DT} and sub-metaverses to achieve a $25.75\%$ reduction in sub-synchronization time compared to a baseline association scheme based on the \ac{SNR}.  
\vspace{-0.2cm}
\section{System Model and Problem Formulation}
\vspace{-0.1cm}
Consider a geographical zone $\mathcal{Z} \subset  \mathbb{R}^{3}$ representing a portion of the real world as shown in Fig.~\ref{System Model}. In this zone, there exists a set $\mathcal{B}$ of $B$ \ac{MEC} servers deployed at the wireless network \acp{BS} to provide digital teleportation, replication, and fine-tuned rendering services. Given that the real world is massive and limitless, it must be digitally represented as a massive system of decentralized sub-metaverses. As such, $B$ \ac{MEC} servers must collectively teleport zone $\mathcal{Z}$ into the digital world as a set of decentralized, yet tightly interconnected sub-metaverses. This guarantees that the computationally intensive teleportation and rendering processes within each sub-metaverse are efficiently deployed and distributed across the network's \ac{MEC} servers. Subsequently, zone $\mathcal{Z}$ is partitioned into a set $\mathcal{A}$ of $A$ non-overlapping regions. Accordingly, each region $a \in \mathcal{A}$ is mapped to a decentralized sub-metaverse $s \in \mathcal{S} \triangleq \{1,2,
\ldots, S\}$, such that each sub-metaverse is associated to a \ac{MEC} server. Thus, we denote the tuple $\boldsymbol{u}_b=(a_b,s_b)$ that maps every region $a \in \mathcal{A}$ to its corresponding sub-metaverse $s \in \mathcal{S}$ at the designated \ac{MEC} server $b \in \mathcal{B}$.\\
\indent Furthermore, there exists a set $\mathcal{K}$ of $K$ of autonomous cyber-physical systems that physically operate in zone $\mathcal{Z}$. To ensure that such systems participate in the metaverse and are autonomously replicated (twinned), a \emph{seamless digital replication} must be performed on each one of them. Thus, each system $k \in \mathcal{K}$ will be operating as a \ac{DT}. Accordingly, for these \acp{DT}, we define the set $\mathcal{P}$ of $P$ \acp{PT}. Such \acp{PT} are served by the $B$ \ac{MEC} servers to simulate and render their corresponding set $\mathcal{C}$ of $C$ \acp{CT}. Thus, each \ac{DT} application $k \in \mathcal{K}$ is represented as a tuple $\boldsymbol{v}_k=(p_k,c_k)$ that maps every \ac{PT} $p \in \mathcal{P}$ to its corresponding \ac{CT} $c \in \mathcal{C}$, where $K = P = C$.
Essentially, the set of \ac{PT} systems that exist within each region $a \in \mathcal{A}$, have corresponding \ac{CT} counterparts that reside in the respective digital replica of this region, i.e., sub-metaverse $s \in \mathcal{S}$.
Consequently, it is necessary that each \ac{MEC} server $b \in \mathcal{B}$ \emph{first} replicates each physical region into the digital world as a sub-metaverse, while also, \emph{simultaneously} guaranteeing the synchronization of the \ac{PT} of each region with its respective \ac{CT} in its sub-metaverse. It is important to note that the \emph{simultaneity} of these two processes is \emph{a necessary condition that guarantees the duality and twinning between the real world and the metaverse.}
\begin{figure}
    \centering
	\includegraphics[scale=0.428]{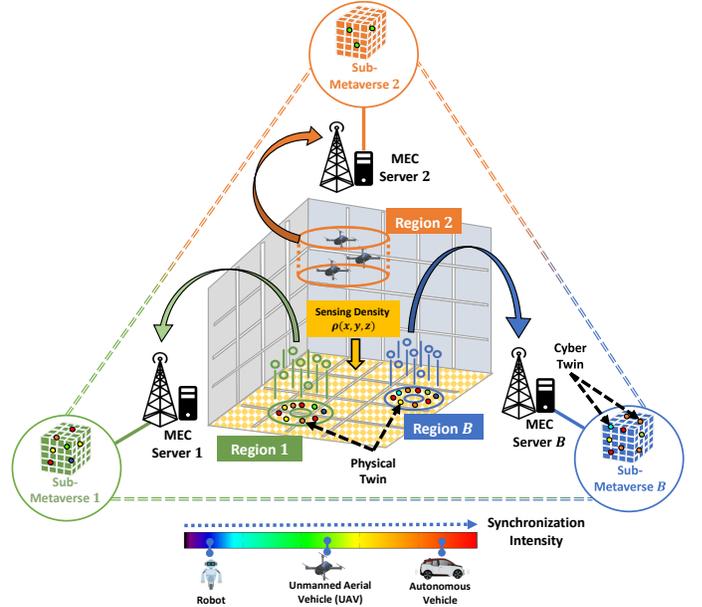}
	\caption{\small{Illustrative figure showcasing the decentralized sub-metaverses and their associated  digital twins distributed at the network edge.}}
	\label{System Model}
	\vspace{-0.45cm}
\end{figure}
\vspace{-0.17cm}
\subsection{Decentralized Metaverse Model}
\vspace{-0.1cm}
To digitally model the physical regions in the metaverse, with high precision, a massive number of sensors must be deployed to replicate aspects of the real world. Because of this large number of sensors deployed, we assume that the inter-spacing between these sensors is relatively minimal. As such, the sensors will be continuously distributed according to a spatial distribution $g(x,y,z)$ that describes the likelihood localization of the sensors around the 3D objects in zone $\mathcal{Z}$, where $x$ is the longitude unit distance, $y$ is the latitude unit distance, and $z$ is the height unit distance. For instance, areas with a large number of physical objects, have a high number of sensors (e.g. highly urban and crowded areas). Meanwhile, areas with a considerable lower physical density, require a smaller number of sensors to be deployed and replicated (e.g. empty fields in rural areas). Also, given that 3D objects in the real world are described with unique attributes of different dimensions, a unit volumetric \emph{sensing density} $\rho(x,y,z)$ ($bps/m^3$) is used to describe the flow of data from these sensors\cite{toumpis2006optimal}. 


To tractably maintain the metaverse synchronized with the real world, we discretize time into independent time slots of duration $\Delta$. During each slot, the sensory data must be uploaded to the network edge. We assume that $\Delta$ is minimal, i.e., $\Delta$ will not affect the precision of the teleported sub-metaverse in mimicking reality. Hence, we consider that each sensor $q \in \mathcal{Q} \triangleq \{1,2,\ldots, Q\}$ forms an independent unit of infinitesimal volume $\epsilon$ centered at $(x,y,z)$ such that the rate in which data is generated from this infinitesimal volume is $\epsilon\rho(x,y,z)$.  Then, the rate at which sensor $q$ uploads its data to \ac{MEC} server $b$ is:
\vspace{-0.3cm}
\begin{equation}
    R_{q,b} (Q_b) =  \frac{W_b^s}{Q_b} \log_2\left(1+ \frac{h_{q,b} \xi_q}{\sigma_b^2}\right), 
\vspace{-0.2cm}
\end{equation}
where $W_b^s$ is the available bandwidth for uploading sensor data to synchronize sub-metaverse $s$ at \ac{MEC} server $b$, $Q_{b}$ is the number of sensors connected to \ac{MEC} server $b$, $h_{q,b}$ is the channel gain between the sensor $q$ and \ac{MEC} server $b$, $\xi_q$ is the transmit power of sensor $q$, and $\sigma_b^2$ is the spectral noise power at \ac{MEC} server $b$.   

Subsequently, the time needed for sensor $q$ to upload its data generated within $\Delta$ to \ac{MEC} server $b$ is: 
\vspace{-0.22cm}
\begin{equation}
    t^{\mathrm{com}}_{q,b} (Q_b) = \frac{\Delta \epsilon}{R_{q,b}(Q_b)}\rho (x,y,z).
    \vspace{-0.15cm}
\end{equation}
Furthermore, the computing time needed to render this infinitesimal volume in sub-metaverse $s \in \mathcal{S}$ at its associated \ac{MEC} server $b \in \mathcal{B}$ can be written as:
\vspace{-0.1cm}
\begin{equation}
    t_{q,b}^{\mathrm{cmp}}(Q_b, \psi_b^s) = \frac{\Lambda \Delta \epsilon}{\psi^{s}_{b}}Q_{b} \rho (x,y,z),
    \vspace{-0.15cm}
\end{equation}
where $\Lambda$ is a coefficient related to the topological complexity of the regions and $\psi_{b}^{s}$ is the number of computing resources assigned for rendering the sub-metaverse $s$ assigned at \ac{MEC} server $b$. 
Then, we define the total delay needed to synchronize the volume unit $\epsilon$ of sensor $q$ from the real world with its representative unit in the designated sub-metaverse at \ac{MEC} server $b$ as:
\vspace{-0.25cm}
\begin{equation} 
t^{\mathrm{sync}}_{q,b}(Q_b, \psi_b^s)= t^{\mathrm{com}}_{q,b}(Q_b) + t^{\mathrm{cmp}}_{q,b}(Q_b, \psi_b^s).
\end{equation}
Moreover, replicating region $a$ that is composed of a large number of these infinitesimal sensors requires synchronizing each sensor with its infinitesimal representation in the sub-metaverse.  
Hence, we define the \emph{sub-synchronization time} as the total delay needed to synchronize region $a$ and its corresponding sub-metaverse $s$ at \ac{MEC} server $b$ as:
\vspace{-0.1cm}
\begin{equation}
    T_b(Q_b, \psi_b^s) = \iiint_{a_{b}} t^{\mathrm{sync}}_{q,b}(Q_b, \psi_b^s) g(x,y,z) \, dx \, dy \, dz.
    \vspace{-0.15cm}
\end{equation}

\subsection{DT Model}
To perfectly consider a synchronized digital world, there is a need to synchronize each of the constituents of the sub-metaverse as well.
In the designated zone, the \acp{PT} are spatially distributed according to a distribution $f(x,y,z)$ in the three dimensional plane. These \acp{PT} are autonomous systems that execute real-time decisions (e.g., autonomous vehicles, drones, robots, etc). Each \ac{DT} $k \in \mathcal{K}$ is described with a \emph{synchronization intensity} $\mu_k$ that captures the maximum allowable time for the \ac{CT} to replicate the same action executed by the \ac{PT}. $\mu_k$ is primarily characterized by the time needed for the PT to execute its task. For instance, an autonomous vehicle will have a high synchronization intensity as it executes its decisions in near real-time (the digital counterpart must execute actions critically). Meanwhile, a robot can tolerate a larger time margin with respect to action execution.\\
\indent Each \ac{PT} is equipped with a large set of sensors that enable a real-time execution of decisions. Concurrently, the corresponding \ac{CT} must to mimic the same action performed by its twinned \ac{PT} while bounded by its intensity limit $\mu_k$. We assume that the inter-region mobility in this system is limited such that each \ac{PT} remains connected to the same \ac{BS}; and thus, each \ac{CT} remains residing in the same sub-metaverse.
Hence, each \ac{PT} $p_k$ must offload its sensory data $\mathcal{D}_k$ collected at each time step to the \ac{MEC} server that hosts its \ac{CT} $c_k$. Subsequently, the achievable uplink rate by $p_k$ and its designated \ac{MEC} server $b$ will be: $r_{k,b} = W_k \log_2 \left(1+ \frac{h_{k,b} \zeta_{k}}{\sigma_b^2}\right),$
where $W_k$ is the uplink bandwidth for \ac{DT} $k$, $h_{k,b}$ is the channel gain between \ac{PT} $p_k$ and \ac{MEC} server $b$, and $\zeta_k$ is the transmit power of $p_k$. 
Thus, the time needed to upload the data generated from $p_k$ to \ac{MEC} server $b$ where the \ac{CT} resides: 
\vspace{-0.18cm}
\begin{equation}
    \tau_{k,b}^{\mathrm{com}} = \frac{D_k}{r_{k,b}}. 
\vspace{-0.15cm}
\end{equation}
In addition, the time needed to execute the corresponding \ac{CT} action at the \ac{MEC} server $b$ is defined as:
\vspace{-0.13cm}
\begin{equation}
 \tau_{k,b}^{\mathrm{cmp}}(\phi_b^k) =  \frac{\Gamma_{k}D_{k}}{\phi_b^{k}},  
 \vspace{-0.17cm}
\end{equation}
where $\Gamma_{k}$ is the twin complexity that relates to the accuracy of the \ac{CT} replication and the intricacy of the underlying physical application, 
and $\phi_{b}^{k}$ is the computing power allocated to the \ac{DT} application $\boldsymbol{v}_k$ at \ac{MEC} server $b$.
Then, the \emph{twinning synchronization time} needed for \ac{DT} $k$ to map the \ac{PT} action in the real world to its \ac{CT} in the sub-metaverse will be:
\begin{equation}
\vspace{-0.2cm}
 \tau_{k,b}^{\mathrm{sync}} (\phi_b^k) =  \tau_{k,b}^{\mathrm{com}} + \tau_{k,b}^{\mathrm{cmp}}(\phi_b^k).  
\end{equation}
\subsection{Problem Formulation}
Our goal is to teleport the zone and the existing \ac{PT} applications into the digital world by distributing them over the network edge. This requires partitioning the regions and mapping them into sub-metaverses at the \ac{MEC} servers, while accommodating the \acp{CT} in their designated sub-metaverses. Thus, our goal to minimize the \emph{sub-synchronization time} between the real world and distributed metaverses, while satisfying the \ac{DT} \emph{synchronization intensity} requirements. In addition to this association, accommodating the sub-metaverses and the \acp{DT} at the edge requires an efficient allocation of the computational resources at the \ac{MEC} server between sub-synchronization and \ac{DT} synchronization. Then, this problem can be formulated as follows:
\vspace{-0.2cm}
\begin{subequations}
\label{opt1}
\begin{IEEEeqnarray}{s,rCl'rCl'rCl}
& \underset{{a_{b, b \in \mathcal{B}}}, \boldsymbol{\psi}, \boldsymbol{\Phi}}{\text{min}} &\quad& \frac{1}{B}\sum_{b \in \mathcal{B}} T_b(Q_b, \psi_b^s),  \label{obj1}\\
&\text{s.t.} && \tau_{k,b}^{\mathrm{sync}}(\phi_b^k) \leq \frac{1}{\mu_{k}}  \quad \forall k \in \mathcal{K}, \forall b \in \mathcal{B}    \label{c2}, \\
& && \psi_{b}^{s} + \sum_{k=1}^{K} \phi_{b}^{k}\leq \Psi_{b} \quad \forall b \in \mathcal{B}  \label{c3}, \\ 
& && \sum_{b\in \mathcal{B}} x_{k,b} \leq 1 \quad  \forall k \in \mathcal{K},\\
& && x_{k,b} \in \{0,1\}, \quad \forall k \in \mathcal{K}, \forall b \in \mathcal{B}\\
& && \bigcup_{b\in \mathcal{B}} a_{b} = \mathcal{Z}, \\ 
& && a_{i} \cap a_{j} =  \emptyset \quad \forall i,j \in \mathcal{A}, i \neq j,
\end{IEEEeqnarray}
\end{subequations}
where $\boldsymbol{\psi} =[\psi_1^{s}, \psi_2^{s},\ldots, \psi_B^{s}]^T $ is the computing resource allocation vector for sub-synchronization. The computing resource allocation and association matrix for \acp{DT} synchronization is given by: \vspace{-0.23cm}
\begin{equation}
\vspace{-0.18cm}
    \boldsymbol{\Phi}=\begin{bmatrix} \phi_1^{1},\ldots, \phi_1^{K}\\  \phi_2^{1},\ldots, \phi_2^{K}\\ \ldots\\ \phi_B^{1}\ldots, \phi_B^{K}\end{bmatrix}.
\end{equation} Here, $x_{k,b}$ is the association variable between the \ac{MEC} server and the \ac{DT}, and $\Psi_b$ is the maximum computing power of \ac{MEC} server $b$.\\
\indent Problem~\eqref{opt1} is challenging to solve as: a) it involves a set of mutually correlated regions, and b) the effective partitioning of the regions depends on the distribution of \acp{DT} and their synchronization requirements. Hence, an effective solution that can address both region partitioning and \ac{DT} association is needed to alleviate the complexity of the problem. Next, we present an iterative algorithm that leverages \emph{optimal transport theory}~\cite{solomon2018optimal} to solve~\eqref{opt1}.
\vspace{-0.25cm}
\section{Optimal Transport Theory for Joint Sub-Metaverse and DTs Association at the Edge}
In this section, we introduce a novel iterative algorithm to solve~\eqref{opt1} by considering an \emph{optimal transport theory} framework~\cite{solomon2018optimal}.
Optimal transport is suitable here because it provides the optimal mapping, from the sensors to the \ac{MEC} servers, which determines the region partitions that yield the minimal sub-synchronization time. 
We first derive the optimal region partitions that minimize the sub-synchronization time based on the metaverse synchronization resources at the \ac{MEC} servers. Then, the \ac{DT}-MEC association is determined under given optimal region partitions and the corresponding computing resources required to synchronize each \ac{DT} are allocated accordingly.\\
\indent First, we consider finding the optimal region partitions that minimize the sub-synchronization time, under a given computing resource allocation for metaverse synchronization. For that, we define $\boldsymbol{\omega} = (x,y,z)$ as the 3D location of sensor $q$, and define $\boldsymbol{\kappa}_b$ as the location of the \ac{MEC} server b.
Then, we reformulate $t^{\mathrm{sync}}_{q,b} = L(g(x,y,z)) F(\boldsymbol{\omega},\boldsymbol{\kappa}_b) $, where $L(g(x,y,z)) = Q\Delta \epsilon \iiint_{a_b} g(x,y,z) \, dx \, dy \, dz$, $F(\boldsymbol{\omega},\boldsymbol{\kappa}_b)= \frac{\rho(\boldsymbol{\omega})}{W_b^s \log_2(1+ \frac{\beta(\boldsymbol{\omega}, \boldsymbol{\kappa}_b)}{\sigma_b^2})} + \frac{\Lambda\rho(\boldsymbol{\omega})}{\psi_b^s}$, and $\beta_{q,b}(\boldsymbol{\omega}, \boldsymbol{\kappa}_b) = h_{q,b} \xi_q$ as the received power from sensor $q$ at \ac{BS} $b$.
Hence, problem~\eqref{opt1} is reduced to a region partitioning optimization problem under given resource allocation, and can be expressed as: 
\begin{subequations}
\label{opt2}
\begin{IEEEeqnarray}{s,rCl'rCl'rCl}
&\underset{a_{b, b \in \mathcal{B}}}{\text{min}} &\quad \frac{1}{B} \sum_{b \in \mathcal{B}} & \iiint_{a_b} L(g(x,y,z)) F(\boldsymbol{\omega},\boldsymbol{\kappa}_b) \label{obj2} \\ 
&&& \times g(x,y,z) \, dx \, dy \, dz , \nonumber \\
&\text{s.t.} & \bigcup_{b\in \mathcal{B}} a_{b} &= \mathcal{Z}, \\
&& a_{i} \cap a_{j}& =  \emptyset \quad \forall i,j \in \mathcal{A}, i \neq j.
\end{IEEEeqnarray}
\end{subequations}
Solving~\eqref{opt2} is challenging since the optimization variables are a set of continuous and dependent region partitions. Thus, to overcome this challenge, we model it as an \emph{optimal transport theory} problem to derive the optimal region partitions that minimize the sub-synchronization time~\cite{mozaffari2017wireless}.
In general, optimal transport is a mathematical framework that considers quantifying the minimal cost for transporting the probability mass of a distribution $\chi_1$ to another one $\chi_2$ on $\Omega \subset \mathbb{R}^w$, by finding an optimal transport map $\Pi$ from $\chi_1$ to $\chi_2$ that minimizes the following function:
\vspace{-0.28cm}
\begin{equation}
\vspace{-0.25cm}
\label{equation}
    \underset{\Pi}{\text{min}} \quad \int_{\Omega} \mathcal{J}(m, n) \chi_1(m) \ dm ; \quad n=\Pi(m),
\end{equation}
where $\mathcal{J}(m, n)$ is the cost of transporting a unit from point $m$ to point $n$. 

Since the sensing density follows a continuous distribution and the \ac{MEC} servers can be considered as discrete points of this distribution, it is natural to model our region partitioning problem as a \emph{semi-discrete optimal transport} problem. Thus, we formulate this mapping problem from the infinitesimal sensor to the \ac{MEC} server as a semi-discrete optimal transport problem that minimizes its sub-synchronization time. In this case, the cost of transportation is considered to be the synchronization time of the infinitesimal volume. Thus, the zone is partitioned into optimal region partitions via this optimal transport map, while noting that $\frac{1}{B}$ is a constant term that does affect the mapping function as it aids in computing the average only. We consider the objective in \eqref{obj2} equivalent to~\eqref{equation} without the constant term. Next, we prove that an optimal solution for this semi-discrete mapping in \eqref{opt2} exists.
\begin{lemma}
 The optimization problem in \eqref{opt2} admits an optimal solution.
\begin{proof}
Let $\alpha_b= \iiint_{a_b} g(x,y,z) \ dx \ dy \ dz$, $ \forall b \in\mathcal{B}$, and define the unit simplex as follows:  
$E= \{ \boldsymbol{\alpha} = (\alpha_1, \alpha_2, \ldots, \alpha_B) \in \mathbb{R}^{B},  \sum_{b=1}^{B} \alpha_b =1, \alpha_b \geq 0 \}$.
Moreover, we define $\chi_{1}(x,y,z) = g(\boldsymbol{\omega})$ and $\mathcal{J}(\boldsymbol{\omega}, \boldsymbol{\kappa}_b) = L(\alpha_b) F(\boldsymbol{\omega},\boldsymbol{\kappa}_b)$.
Clearly, since $F(\boldsymbol{\omega},\boldsymbol{\kappa}_b)$ is continuous, and noting that $L(\alpha_b)$ is differentiable, we can see that $\mathcal{J}(\boldsymbol{\omega}, \boldsymbol{\kappa}_b)$ is continuous. Subsequently, $\mathcal{J}(\boldsymbol{\omega}, \boldsymbol{\kappa}_b)$ is also a lower semi-continuous function. Hence, considering $g$ as a continuous distribution and $\lambda = \sum_{b \in \mathcal{B}} \alpha_b \delta_{\boldsymbol{\kappa}_b}$ as a discrete distribution, with the lower semi-continuous cost function, there theoretically exists an optimal transport map $\Pi$ from $g$ to $\lambda$. Then, for any $\boldsymbol{\alpha} \in E$, problem~\eqref{opt2} admits an optimal solution. Since $E$ is a unit simplex that is a compact set and non-empty, then problem~\eqref{opt2} admits an optimal solution over the entire set $E$. Thus, the lemma is proved.
\end{proof}
\end{lemma}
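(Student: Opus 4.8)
The plan is to exploit the two-level structure implicit in the statement's notation. An admissible partition $\{a_b\}_{b\in\mathcal B}$ of $\mathcal Z$ is completely specified by (i) the vector of masses $\boldsymbol\alpha=(\alpha_1,\dots,\alpha_B)$ with $\alpha_b=\iiint_{a_b}g\,dx\,dy\,dz$, which lies in the unit simplex $E$ once $g$ is normalized over $\mathcal Z$, and (ii), given $\boldsymbol\alpha$, the actual assignment of points to servers. So I would first fix $\boldsymbol\alpha\in E$ and establish existence of an optimal partition carrying those prescribed masses, and then let $\boldsymbol\alpha$ vary over the compact set $E$ and conclude by a Weierstrass-type argument.

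For the inner problem (fixed $\boldsymbol\alpha$), the key point is that $L(\alpha_b)$ is then a constant, so $\mathcal J(\boldsymbol\omega,\boldsymbol\kappa_b)=L(\alpha_b)F(\boldsymbol\omega,\boldsymbol\kappa_b)$ is a genuine per-unit transport cost: continuous in $\boldsymbol\omega$ (as a composition of continuous maps built from $\rho$, $\beta$, the logarithm, and the strictly positive resource terms $W_b^s$, $\psi_b^s$), hence in particular lower semi-continuous. Partitioning $\mathcal Z$ so that $\iiint_{a_b}g=\alpha_b$ while minimizing $\sum_b\iiint_{a_b}\mathcal J(\boldsymbol\omega,\boldsymbol\kappa_b)g(\boldsymbol\omega)\,d\boldsymbol\omega$ is then exactly the semi-discrete Monge problem of transporting the absolutely continuous source $g$ onto the discrete measure $\lambda=\sum_{b\in\mathcal B}\alpha_b\delta_{\boldsymbol\kappa_b}$. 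I would invoke the classical existence theorem: a lower semi-continuous cost makes the Kantorovich relaxation attain its minimum, and --- since $g$ has no atoms and $\lambda$ is finitely supported --- an optimal plan is induced by a transport map $\Pi$, i.e. by a bona fide partition. This produces an optimal partition $\{a_b(\boldsymbol\alpha)\}$ and an optimal value $V(\boldsymbol\alpha)$ for every $\boldsymbol\alpha\in E$.

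For the outer problem I would show that $\boldsymbol\alpha\mapsto V(\boldsymbol\alpha)$ is lower semi-continuous on $E$: along a sequence $\boldsymbol\alpha^{(n)}\to\boldsymbol\alpha$ realizing $\liminf V$, take the associated optimal plans, extract a weak-$*$ limit (the set of plans whose marginals converge is weak-$*$ compact), and pass to the limit in the cost integral using the lower semi-continuity of $\mathcal J$ together with the joint continuity of $(\boldsymbol\alpha,\boldsymbol\omega)\mapsto L(\alpha_b)F(\boldsymbol\omega,\boldsymbol\kappa_b)$ on the compact set $\mathcal Z\times E$. Since $E$ is nonempty and compact, a lower semi-continuous function on $E$ attains its infimum, so some $\boldsymbol\alpha^\star\in E$ satisfies $V(\boldsymbol\alpha^\star)=\min_{\boldsymbol\alpha\in E}V(\boldsymbol\alpha)$, and the partition $\{a_b(\boldsymbol\alpha^\star)\}$ is an optimal solution of \eqref{opt2}.

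The step I expect to be the main obstacle is the outer one. The difficulty is that $\boldsymbol\alpha$ enters \eqref{opt2} both through the target marginal $\lambda$ and, via $L(\alpha_b)$, through the cost itself, so \eqref{opt2} is a nonlinear, location-type variant of optimal transport rather than a plain transport problem; establishing lower semi-continuity of $V$ therefore requires combining stability of optimal transport under weak convergence of marginals with uniform continuity of the cost over the compact zone. By contrast, the inner existence step is a direct application of standard semi-discrete optimal transport theory and should be essentially routine.
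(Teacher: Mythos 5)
Your proposal follows essentially the same route as the paper: the same inner/outer decomposition into a semi-discrete transport problem with lower semi-continuous cost for fixed $\boldsymbol{\alpha}$, followed by optimization over the compact simplex $E$. In fact you go a step beyond the paper, which jumps from ``$E$ is compact and non-empty'' to attainment without justifying any semicontinuity of the value function $\boldsymbol{\alpha}\mapsto V(\boldsymbol{\alpha})$; your stability argument (weak-$*$ compactness of plans plus joint continuity of $L(\alpha_b)F(\boldsymbol{\omega},\boldsymbol{\kappa}_b)$ on $\mathcal{Z}\times E$) is exactly what is needed to close that gap, and is correct.
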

\vspace{-0.3cm}
\indent Furthermore, we characterize the transport map $\Pi$ which relates each sensor to its corresponding sub-metaverse as follows: 
\vspace{-0.3cm}
\begin{equation}
\label{equation map}
    \biggl\{  \Pi(\boldsymbol{\omega}) = \sum_{b=1}^{B}  \boldsymbol{\kappa}_b    \mathbbm{1}_{a_b}(\boldsymbol{\omega}); \iiint_{a_b} g(\boldsymbol{\omega}) \, d\boldsymbol{\omega} = \alpha_b      \biggl\}
    \vspace{-0.2cm}
\end{equation}
In the following proposition, we set up the solution space for~problem~\eqref{opt2}, which yields the optimal region partitions that minimize the sub-synchronization time.
\begin{proposition}
\label{proposition1}
The optimal region partitioning is given by the following map:
\vspace{-0.25cm}
\small{
\begin{equation}
\label{OptimalMap}
    a_b^{*} = \biggl\{\boldsymbol{\omega}=(x,y,z) |~\alpha_b F(\boldsymbol{\omega},\boldsymbol{\kappa}_b) \leq \alpha_j F(\boldsymbol{\omega},\boldsymbol{\kappa}_j), \forall j\neq b \in \mathcal{B} \biggl\}.
    \vspace{-0.3cm}
\end{equation}
}
\end{proposition}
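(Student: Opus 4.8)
The plan is to combine the existence statement of \textbf{Lemma~1} with a pointwise interchange (exchange) argument on problem~\eqref{opt2}. I would start by invoking Lemma~1 to fix an optimal mass vector $\boldsymbol{\alpha}^{\star}=(\alpha_1^{\star},\ldots,\alpha_B^{\star})\in E$ and an optimal partition $\{a_b\}_{b\in\mathcal{B}}$ that realizes it; this is legitimate because $E$ is compact and nonempty and the cost $\mathcal{J}(\boldsymbol{\omega},\boldsymbol{\kappa}_b)=L(\alpha_b)F(\boldsymbol{\omega},\boldsymbol{\kappa}_b)$ is lower semi-continuous, as already established in the proof of Lemma~1. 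Once $\boldsymbol{\alpha}^{\star}$ is held fixed, $\boldsymbol{\omega}\mapsto\mathcal{J}(\boldsymbol{\omega},\boldsymbol{\kappa}_b)$ is an ordinary function of the sensor location, the objective in~\eqref{obj2} becomes the semi-discrete transport cost $\sum_{b}\iiint_{a_b}\mathcal{J}(\boldsymbol{\omega},\boldsymbol{\kappa}_b)\,g(\boldsymbol{\omega})\,d\boldsymbol{\omega}$ of the form~\eqref{equation} (up to the harmless constant $\tfrac{1}{B}$), and the map in~\eqref{equation map} is the Monge map pushing $g$ onto $\lambda=\sum_{b}\alpha_b^{\star}\delta_{\boldsymbol{\kappa}_b}$, so it only remains to identify the shape of its cells.

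For the identification I would argue by contradiction. Suppose the optimal partition assigns a set $S\subseteq a_b$ of strictly positive $g$-measure to locations at which $\alpha_b^{\star}F(\boldsymbol{\omega},\boldsymbol{\kappa}_b)>\alpha_j^{\star}F(\boldsymbol{\omega},\boldsymbol{\kappa}_j)$ for some $j\neq b$; without loss of generality take $S$ to shrink to a Lebesgue point $\boldsymbol{\omega}_0$ of the relevant integrands. Re-assigning $S$ from region $b$ to region $j$ is admissible, and, measured against the cost evaluated at $\boldsymbol{\alpha}^{\star}$, it changes the objective by $\iiint_{S}[\alpha_j^{\star}F(\boldsymbol{\omega},\boldsymbol{\kappa}_j)-\alpha_b^{\star}F(\boldsymbol{\omega},\boldsymbol{\kappa}_b)]\,g(\boldsymbol{\omega})\,d\boldsymbol{\omega}$, which is strictly negative for $S$ small enough — contradicting optimality. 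Hence $g$-a.e.\ point of $a_b$ obeys $\alpha_b^{\star}F(\boldsymbol{\omega},\boldsymbol{\kappa}_b)\le\alpha_j^{\star}F(\boldsymbol{\omega},\boldsymbol{\kappa}_j)$ for all $j$, i.e.\ $a_b\subseteq a_b^{\star}$ up to a $g$-null set. Because the candidate cells $\{a_b^{\star}\}$ of~\eqref{OptimalMap} already cover $\mathcal{Z}$ and overlap only on the tie locus $\{\alpha_b^{\star}F(\boldsymbol{\omega},\boldsymbol{\kappa}_b)=\alpha_j^{\star}F(\boldsymbol{\omega},\boldsymbol{\kappa}_j)\}$, which carries no $g$-mass under mild non-degeneracy of $F$, the inclusions are equalities, so the optimal partition coincides, $g$-essentially uniquely, with~\eqref{OptimalMap}; continuity of $F$ in $\boldsymbol{\omega}$ shows each $a_b^{\star}$ is exactly the closed region written there.

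I expect the genuine difficulty to be the \emph{self-consistency coupling} hidden in the cost: $\mathcal{J}(\boldsymbol{\omega},\boldsymbol{\kappa}_b)=L(\alpha_b)F(\boldsymbol{\omega},\boldsymbol{\kappa}_b)$ depends on $\alpha_b=\iiint_{a_b}g$, which is precisely the object being optimized, so the rearrangement in the exchange step also perturbs the cost it is being tested against. My way around this is to anchor the whole argument at the optimal $\boldsymbol{\alpha}^{\star}\in E$ furnished by Lemma~1 — performing the interchange against that frozen cost — and then to close the loop by verifying the fixed-point property, namely that the cells $\{a_b^{\star}\}$ generated from $\boldsymbol{\alpha}^{\star}$ through~\eqref{OptimalMap} reproduce the masses $\boldsymbol{\alpha}^{\star}$, so they are admissible and optimal; the compactness and convexity of the simplex $E$ is what makes this fixed-point step tractable. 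The remaining technical items — making the ``$S$ small enough'' estimate rigorous via Lebesgue points and ruling out the tie locus — are routine given the continuity of $F$ established in the proof of Lemma~1.
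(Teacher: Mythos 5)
The paper itself omits the proof of Proposition~\ref{proposition1}, deferring to an extension of Theorem~1 of the cited reference, so there is no line-by-line argument to match; your variational/exchange strategy is the right family of argument for that result. However, your execution has a genuine gap at exactly the point you flag as the main difficulty. The objective in \eqref{opt2} equals $\frac{Q\Delta\epsilon}{B}\sum_b \alpha_b I_b$ with $\alpha_b=\iiint_{a_b}g$ and $I_b=\iiint_{a_b}F(\boldsymbol{\omega},\boldsymbol{\kappa}_b)\,g\,d\boldsymbol{\omega}$, i.e.\ a \emph{product} of two integrals over each cell. Moving a set $S\subseteq a_b$ of $g$-mass $ds$ into $a_j$ therefore changes the true objective, to first order and up to the constant $Q\Delta\epsilon/B$, by
\[
\iiint_S\big[\alpha_j F(\boldsymbol{\omega},\boldsymbol{\kappa}_j)-\alpha_b F(\boldsymbol{\omega},\boldsymbol{\kappa}_b)\big]\,g\,d\boldsymbol{\omega}\;+\;ds\,(I_j-I_b),
\]
because the perturbation also changes the multipliers $L(\alpha_b)$ and $L(\alpha_j)$ that weight the \emph{entire} integrals $I_b$ and $I_j$. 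The second term is of the same order $O(ds)$ as the first, so it cannot be discarded. Your exchange step, performed ``against the cost frozen at $\boldsymbol{\alpha}^{\star}$,'' keeps only the first term; it establishes first-order optimality of \eqref{OptimalMap} for the linearized, fixed-marginal transport functional \eqref{equation}, not for \eqref{opt2}. The concluding fixed-point verification does not repair this: a partition that best-responds to its own masses is a stationary point of the iteration in Algorithm~1, but that is not the same as a minimizer of the coupled objective, and the inequality you extract from the contradiction is computed against the wrong functional.

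A correct exchange argument on \eqref{opt2} yields the necessary condition $\alpha_b F(\boldsymbol{\omega},\boldsymbol{\kappa}_b)+I_b\le \alpha_j F(\boldsymbol{\omega},\boldsymbol{\kappa}_j)+I_j$ for $g$-a.e.\ $\boldsymbol{\omega}\in a_b$, i.e.\ each cell carries an additive constant $I_b$ in the comparison; this load-coupled structure is exactly what appears in the theorem the paper cites. To obtain \eqref{OptimalMap} precisely as stated you would have to show that these per-cell constants cancel or can be absorbed (they do not in general), or else reinterpret the $\alpha_b$ as exogenously prescribed target masses, which turns the problem into a marginal-constrained semi-discrete optimal transport problem whose cells are Laguerre cells with additive dual weights --- again not of the form \eqref{OptimalMap}. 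The Lebesgue-point and tie-locus details you call routine are indeed routine; the load-coupling term is the substantive obstruction, and your proposal as written does not overcome it.
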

\begin{proof}
The complete proof is omitted due to space limitations. Essentially, Proposition~\ref{proposition1} can be proved by extending Theorem 1 in\cite{mozaffari2017optimal} to a 3D scenario, while modifying the cost function to include both communication and computing resources at the \ac{BS} and converting it to an uplink scenario.      
\end{proof}
\vspace{-0.3cm}
\begin{algorithm}[t]
\smaller
    \caption{\small Optimal Transport Algorithm for Joint Sub-Metaverse and DT Association}
    \KwInput{$g(x,y,z)$, $\rho(x,y,z)$, $f(x,y,z)$, $Q$, $\mu_k$, $\forall k \in \mathcal{K}$, $\beta_{q,b}$, $\forall b \in \mathcal{B}$, $\forall{q} \in \mathcal{Q}$.}
    \KwOutput{$a_b^*$, $T_b^*$, $\forall b \in \mathcal{B}$, $\boldsymbol{\psi}^*, \boldsymbol{\Phi}^*$.}
    \KwInitialize{Set iterations $\nu = 0$ and $\psi_b^{{s}^{(0)}} =\Psi_b, \forall b \in \mathcal{B}$.}
    Generate initial region partitions $a_b^{(0)}$ and calculate $\alpha_b^{(0)}$.\\
    \Repeat{the convergence of \eqref{equation}}{
    $\nu \leftarrow \nu +1$.\\
    Generate region partitions $a_b^{(\nu)}$ using $\alpha_b^{(\nu-1)}$ and $\psi_b^{s^{(\nu-1)}}$ according to \eqref{OptimalMap}.\\
    Update $\alpha_b^{(\nu)} = \iiint_{a_b^{(\nu)}} g(\boldsymbol{\omega}) \, d\boldsymbol{\omega}$\\
        \For{each region $a_b$}{
             Associate the DTs in region $a_b$ to \ac{MEC} server $b$.\\
             \For{each \ac{DT} $k$ connected to \ac{MEC} server $b$}{
                         Compute $\phi_b^{k^{(\nu)}} = \frac{\Gamma_k}{\frac{1}{\mu_k D_k}- \frac{1}{r_{k,b}}}$.\\
                    }
             Update $\psi_b^{s^{(\nu)}} = \Psi_b - \sum_{k} \phi_b^{k^{(\nu)}}$.\\
        }  
 }
    Determine $a_b^{*} = a_b^{(\nu)}$,  $\boldsymbol{\psi}^{*}= \boldsymbol{\psi}^{(\nu)}$, $\boldsymbol{\Phi}^{*} = \boldsymbol{\Phi}^{(\nu)}$.\\
    Compute the average sub-synchronization time as \eqref{equation} $\times \frac{1}{B}.$
\end{algorithm} 
\setlength{\textfloatsep}{0.1cm}
According to Proposition 1, we can see that $a_b$ and $\alpha_b$ are correlated. Hence, to divide the zone into optimal partitions, it is necessary to first initialize region partitions from an arbitrary partitioning scheme (e.g., Voronoi diagrams). After that, the partitions are iteratively updated based on \eqref{OptimalMap}.
Moreover, after the partitions in each iteration are determined, the \acp{DT} residing in each region are accommodated in the corresponding sub-metaverse, whereby each \ac{DT} should have its synchronization requirement satisfied. This will require providing them with sufficient computing resources at the \ac{MEC} server. Note that the region boundaries impact the number of \acp{PT} within each region, and hence, the \ac{DT} computing power that the \ac{MEC} server can provide for each \ac{DT} application. 
After determining the remaining computing resources at each \ac{MEC} server, this mapping procedure is iterated until convergence.  
Hence, we can reach the optimal association of the regions and \ac{DT} applications that can simultaneously guarantee upmost synchronization of the metaverse and its \ac{DT} components.
This iterative algorithm is summarized in Algorithm 1. \vspace{-0.18cm}
\section{Simulation Results and Analysis}
\vspace{-0.15cm}
For our simulations, we consider a cross-sectional grid zone having dimensions of $2$ km $\times$ $2$ km, in which $B=4$ \acp{BS} are deployed to provide digital teleportation services. The sensing density is modeled as a truncated Gaussian distribution with mean values centered at (750 m, 750 m) and having a standard deviation $\sigma$. This model is suitable to model hotspots of sensors. We consider that the \ac{MEC} servers have a computing power of $\Psi_b = [8, 8, 16, 16]$ \SI{}{GHz} and a bandwidth for metaverse synchronization $W_{b}^s = [10, 10, 20, 20]$ \SI{}{MHz}. We consider three types of \ac{DT} applications having $\mu_k =[10, 100, 1000]$ \SI{}{s^{-1}} and $\zeta_k=[100, 200, 300]$ \SI{}{mW}, respectively. Also, the \acp{PT} are distributed according to a Gaussian mixture model. Unless otherwise stated, we set the following simulation parameters: $Q=25,000$, $\Delta =\SI{1}{ms}$, $\epsilon =\SI{0.01}{m^3}$, $\xi_q =\SI{1}{mW}$, $\sigma_b^2 = \sigma_{o}^2 = \SI{-170}{dBm/Hz}$, $\rho(x,y,z) =\SI{50}{bps/m^3}$, $\Lambda =\SI{5000}{cycles/bit}$, $D_k = \SI{10}{Mb}$, $W_k = 1$ \SI{}{MHz}, and $\Gamma_k =10^4 \SI{}{cycles/bit}$.
As a benchmark, we compare our proposed optimal transport solution to an \ac{SNR}-based association scheme. In the \ac{SNR} scheme, the sensors are associated to the \ac{BS} having the best channel conditions. Accordingly, the \acp{PT} in the resulting region are associated to the same \ac{BS} to execute their twinning processes.     
	\begin{figure}[t!]
		\begin{minipage}{0.49\linewidth}
			\centering
			\includegraphics[scale=0.24]{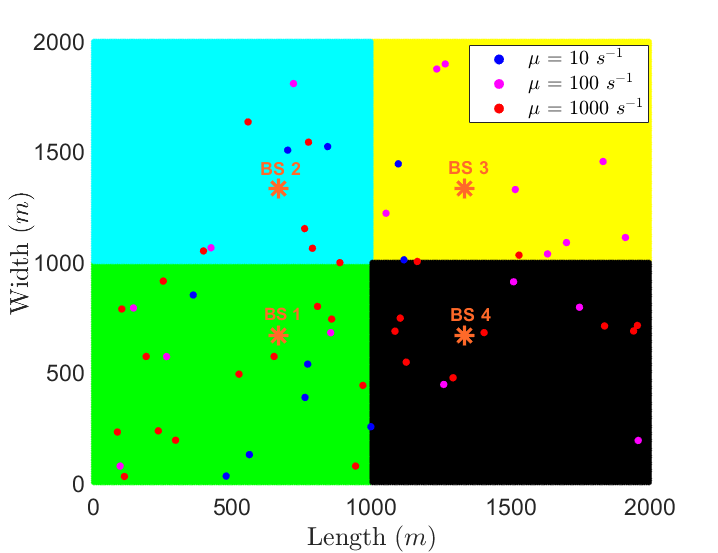}
			\subcaption{}    \label{SNR}
		\end{minipage}
		\begin{minipage}{0.49\linewidth}
			\centering
			\includegraphics[scale=0.24]{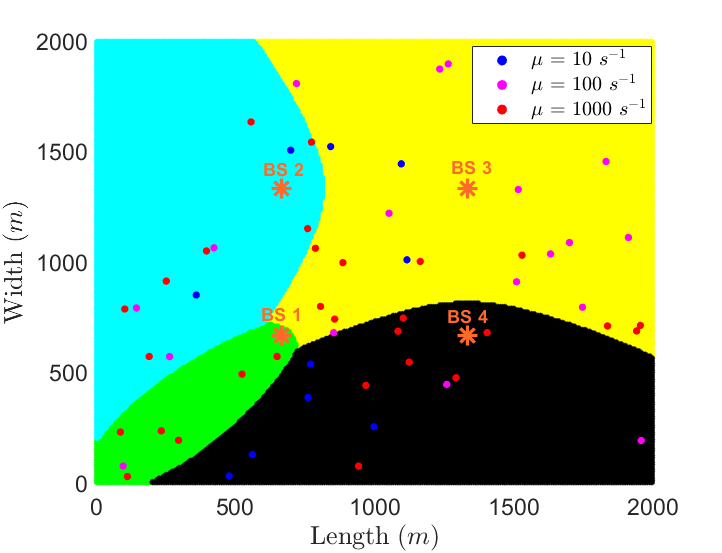}
			\subcaption{}    \label{OT}
		\end{minipage}
		\caption{\small{ Region Partitioning and DT association according to the a) SNR method and b) proposed optimal transport method}.}  \label{partitioning}
		\vspace{-0.1cm}
	\end{figure}


Fig.~\ref{partitioning} compares the partitioning technique adopted by the \ac{SNR}-based approach in Fig.~\ref{SNR}, and our proposed partitioning technique in Fig.~\ref{OT}. Here, the number of \acp{DT} in the network is considered to be $K = 60$. Fig.~\ref{SNR} clearly shows that the regions are equally partitioned based on the \ac{SNR} method. Given that our approach considers the \acp{DT}' operation and its corresponding computing and communication resources, the regions are unequally divided. Moreover, owing to the fact that region 1 has the highest sensing density, it was associated with the smallest region portion and the least number of \acp{DT}. Notably, the other regions characterized with less sensing density are larger in size and encompass a higher number of \acp{DT}. Hence, this figure showcases that our approach provides a \emph{tradeoff between \acp{DT} and sub-metaverses} to guarantee the least sub-synchronization time, unlike the \ac{SNR}-based method.\\
\begin{figure}
    \centering
	\includegraphics[scale=0.35]{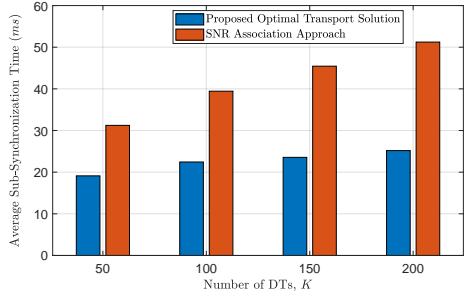}	\caption{\small{The average sub-synchronization time as a function of number of DTs.}}
	\label{Fig3}
	\vspace{-0.4cm}
\end{figure}
\begin{figure}
    \centering
	\includegraphics[scale=0.34]{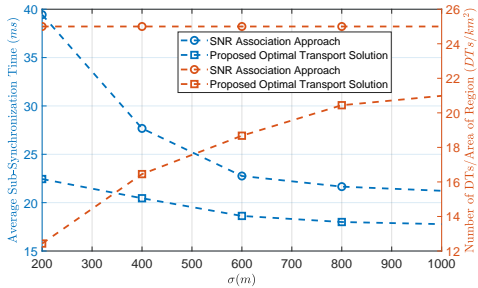}
	\caption{\small{The average sub-synchronization time and the regional density of \acp{DT} versus the standard deviation of the sensing density.}}
	\label{Fig4}
\end{figure}
\indent In Fig.~\ref{Fig3}, the average sub-synchronization time for different number of \acp{DT} in the system is evaluated for the  \ac{SNR}-based association and our proposed optimal transport approach. The proposed method clearly outperforms the \ac{SNR}-based association for all the numbers of \acp{DT}. Fig.~\ref{Fig3} shows that, in our proposed approach, the sub-synchronization time is robust to changes in the number of \acp{DT}. That is, the sub-synchronization time merely increases from $19.11$ \SI{}{ms} to $25.17$ \SI{}{ms} while increasing the number of \acp{DT} from $K=50$ to $K=200$. Meanwhile, when adopting the \ac{SNR} based association, the average sub-synchronization time varies from $31.23$ \SI{}{ms} to $51.23$ \SI{}{ms} for an increase of $K=50$ to $K=200$.\\
\indent Fig.~\ref{Fig4} showcases the average sub-synchronization time and the regional density of \acp{DT} versus the standard deviation of sensing density distribution $\sigma$. First, we can see that the \ac{SNR}-based association does not take into account the number of \acp{DT} nor the available resources. Thus, the regional density of \acp{DT} remains constant despite changes in the sensing density. Meanwhile, in our proposed approach, as $\sigma$ increases, the number of \acp{DT} becomes more uniformly distributed across regions. Thus, the density of \acp{DT} per region increases to asymptotically reach a plateau. Given that a high $\sigma$ characterizes a more uniformly sparsed distribution of sensors, we can see that the regional density reaches a plateau close to that of the \ac{SNR}-based approach. With regards to the sub-synchronization time, we can see that our proposed approach has a $\SI{25.75}{\%}$ lower sub-synchronization time for all values of $\sigma$. Fig.~\ref{Fig4} also shows that the gap between the two methods decreases as $\sigma$ increases. This results from the fact that the number of \acp{DT} becomes more uniformly distributed across regions, which should asymptotically lead to that of the \ac{SNR} method.\vspace{-0.2cm}
\section{Conclusion}
\vspace{-0.1cm}
In this paper, we have proposed a novel framework that decentralizes the metaverse and distributes it along with its \ac{DT} constituents at the edge. 
In particular, we have digitally replicated a physical zone containing autonomous \ac{PT} systems as sub-metaverses that encompass \acp{CT} at the \ac{MEC} servers.
To perform an \ac{E2E} synchronization that utilizes the overall computing and communication resources, we have formulated an optimization problem whose goal is to minimize the sub-synchronization time, while satisfying the \ac{DT} synchronization intensity requirements. This requires partitioning the zone into regions that are associated along with their \acp{PT} as \acp{CT} in their respective sub-metaverses.
To solve this problem, we have presented an iterative algorithm based on optimal transport theory to determine the optimal region mapping and \ac{DT} association that minimizes the sub-synchronization time and synchronizes the \ac{DT} applications. Simulation results show the superiority of our approach compared to \ac{SNR}-based associations that fail to consider the \acp{DT}' operation and its synchronization resources.

\bibliographystyle{IEEEtran}
\def\baselinestretch{0.56}
\bibliography{bibliography}
\end{document}